\documentclass[conference]{IEEEtran}
\IEEEoverridecommandlockouts
\usepackage[utf8]{inputenc}
\usepackage{cite}
\usepackage{amsmath,amssymb,amsfonts}
\usepackage{amsthm}
\usepackage{algorithm}
\usepackage{algorithmic}
\usepackage{bm}
\usepackage{graphicx}
\usepackage{textcomp}
\usepackage{xcolor}
\usepackage{physics}
\usepackage{booktabs}
\usepackage{hyperref}
\usepackage{subfig}
\hypersetup{colorlinks=true, linkcolor=black, citecolor=black, urlcolor=black}
\def\BibTeX{{\rm B\kern-.05em{\sc i\kern-.025em b}\kern-.08em
		T\kern-.1667em\lower.7ex\hbox{E}\kern-.125emX}}
\theoremstyle{IEEEproof}
\newtheorem{theorem}{Theorem}
\newtheorem{lemma}{Lemma}

\theoremstyle{remark}
\newtheorem{remark}{Remark}

\theoremstyle{definition} 
\newtheorem{definition}{Definition}

\newcommand{\Oc}{\Omega_c}
\newcommand{\Op}{\Omega_p}
\newcommand{\Os}{\Omega_s}
\newcommand{\Dc}{\Delta_c}
\newcommand{\Dp}{\Delta_p}
\newcommand{\Ds}{\Delta_s}

\begin{document}

\linespread{0.89}

\title{Unified Analytical Model for Atomic Receivers Under Typical Quantum Interference Paths\\
}

\author{Yiyue Xiang\textsuperscript{1}, 
	Neng Ye\textsuperscript{1},
	Qihao Peng\textsuperscript{2},
	Pei Xiao\textsuperscript{2},
	Jianping An\textsuperscript{1}\\
	\IEEEauthorblockA{\textsuperscript{1} School of Cyberspace Science and Technology, Beijing Institute of Technology, Beijing, China}
	\IEEEauthorblockA{\textsuperscript{2} 5G and 6G Innovation Centre, Institute for Communication Systems of the University of Surrey, Guildford, GU2 7XH, UK.}
}

\maketitle

\begin{abstract}
	Atomic receivers, which leverage the quantum interference termed electromagnetically induced transparency (EIT) for radio-frequency (RF) to optical signal transduction, offer a revolutionary paradigm for next-generation wireless communications. 
	However, current information-theoretic characterizations are predominantly restricted to the $\Xi$-type of EIT path and rely heavily on the weak-probe approximation, which fails to predict the behavior of the atomic receivers under high signal-to-noise ratio regimes. In this paper, we establish a unified analytical model for atomic receivers, and apply this model to three typical quantum interference paths, i.e., $V$-type, $\Lambda$-type, and $\Xi$-type configurations. 
	To provide a universal characterization, we propose the quantum coherence transfer coefficient (QCTC) to model the equivalent channel response induced by atomic receivers, using a steady-state perturbation framework built on the three-level EIT solution. The closed-form expressions of equivalent channel gains are then derived for three paths.
	Our results provide an analytical foundation for future capacity analysis and waveform optimization in atomic radio communication.
\end{abstract}

\begin{IEEEkeywords}
	Atomic receivers, Electromagnetically induced transparency (EIT), Unified channel modeling.
\end{IEEEkeywords}

\section{Introduction}
Modeling the equivalent channel response induced by realistic radio receivers, e.g., power amplifiers nonlinearity \cite{abou2025channel,9666889} and local oscillators  phase noise \cite{ghozlan2017models, piemontese2022new, 6408166}, is the cornerstone of modern information theory.
It serves as an indispensable prerequisite for channel capacity bound derivation, optimal coding scheme design, and end-to-end performance optimization \cite{xie2004network,telatar2002capacity,cheng2023degree}.
For any communication system, a rigorous and accurate channel model forms the basis for characterizing its theoretical performance limits and guiding practical implementation.

Atomic receivers, as a revolutionary technology for future wireless communications, leverage quantum interference effect in atomic ensemble to realize radio frequency (RF) signal reception \cite{anderson2020atomic}. Unlike conventional receivers, it breaks through the Chu limit on metal antennas and the thermal noise floor of electronic devices, offering ultra-wideband spectrum compatibility and exceptional sensitivity \cite{schlossberger2024rydberg,meyer2020assessment}. However, the lack of a rigorous information-theoretic characterization of their equivalent model has become a critical bottleneck restricting the engineering application of atomic receivers.

The core operating principle of atomic receivers is the quantum interference effect known as electromagnetically induced transparency (EIT) \cite{harris1997electromagnetically}. In an EIT-based receiver, the weak RF signal modulates the quantum coherence of the atomic ensemble, which in turn modulates the transmission of an optical probe laser, enabling the RF-to-optical signal conversion for subsequent detection. Physically, EIT can be realized via three interference paths: V-type, $\Lambda$-type, and $\Xi$-type \cite{khan2016role}. These configurations exhibit distinct performance trade-offs across key metrics such as RF sensitivity, instantaneous bandwidth, and anti-interference capability  \cite{sen2015comparison}. However, existing information-theoretic studies on atomic receivers almost exclusively focus on the $\Xi$-type configuration \cite{zhu2025general,yuan2025electromagnetic}, leaving the characterization of $V$-type and $\Lambda$-type configurations largely unexplored in the information theory community.

From a theoretical perspective, the steady-state response of an EIT-based atomic receiver is governed by the Lindblad master equation of a four-level quantum system, which essentially requires solving a $16 \times 16$ system of coupled linear differential equations \cite{chen2025harnessing}. Existing analytical solutions for the $\Xi$-type configuration rely on the overly strict weak-probe assumption ($\Op \ll \gamma, \Oc$), which forces the probe laser Rabi frequency to be far smaller than the atomic decay rate and coupling laser Rabi frequency \cite{jing2020atomic}. However, in practical experimental implementations, the probe laser intensity must be set to a level comparable to the atomic decay rate to achieve an acceptable signal-to-noise ratio. This mismatch leads to significant systematic discrepancies between theoretical predictions of existing models and actual experimental measurements.

In this paper, we address this theoretical gap by developing a unified perturbation model for EIT-based atomic receivers. Using the exact three-level EIT steady-state solution as the zero-order background and treating only the weak RF signal as perturbation, our model eliminates the restrictive weak-probe assumption of prior works \cite{chen2025harnessing,bussey2021rydberg,liao2020microwave}. We propose the quantum coherence transfer coefficient (QCTC) to characterize the linearized equivalent channel gain, bridging quantum dynamics and information-theoretic channel modeling. We also derive exact closed-form QCTC for  $V$-type, $\Lambda$-type, and $\Xi$-type EIT configurations, laying the theoretical foundation for the implementation atomic receiver implementation.

This paper is organized as follows. Section II introduces the general principle of atomic receivers and three quantum interference paths of EIT. Section III presents  main theoretical results, including the unified analytical framework and closed-form equivalent models for each EIT path. Section IV provides the numerical results, Section V concludes this paper.


\section{System Model and Typical Quantum Interference Paths}
In this section, we first introduce the general principle of atomic receivers and the equivalent channel formulation, then present the system models for the three EIT configurations.

\subsection{General Principle and Channel Formulation}

Atomic receivers exploit EIT interference effect in a four-level atomic ensemble to transduce the incident RF signals into measurable optical signals.  The input baseband signal $X(t)$ linearly modulates the RF electric field, defining the RF Rabi frequency $\Omega_s(t) \propto X(t)$. The atomic receiver system is also driven by two strong optical fields, i.e., a probe laser with Rabi frequency $\Omega_p$ and a coupling laser with Rabi frequency $\Omega_c$.

The quantum state evolution of the atomic ensemble is governed by the Lindblad master equation
\begin{equation}
	\dot{\bm{\rho}} = -\frac{i}{\hbar} [\bm{H}, \bm{\rho}] + \mathcal{L}[\bm{\rho}].
	\label{eq:lindblad_master}
\end{equation}
Here, $\bm{\rho}$ is the $4 \times 4$ Hermitian density matrix, whose diagonal elements $\rho_{ii} = \langle i | \bm{\rho} | i \rangle$ denote the population probability of level $|i\rangle$, and off-diagonal elements $\rho_{ij} = \langle i | \bm{\rho} | j \rangle, i \neq j$ denote the quantum coherence between levels $|i\rangle$ and $|j\rangle$. The Hamiltonian $\bm{H}$ comprises the unperturbed atomic energy structure and the dipole-interaction term with electromagnetic fields. The Lindblad dissipator $\mathcal{L}[\bm{\rho}]$ accounts for spontaneous emission and decoherence of atomic energy level.

For steady-state analysis, we set $\dot{\bm{\rho}} = 0$. The macroscopic output observable is the probe laser transmission $\Delta T_p$, which is proportional to the imaginary part of the probe transition coherence $\rho_{\text{probe}}$, i.e., $\Delta T_p \propto \text{Im}(\Delta\rho_{\text{probe}})$, and is subsequently converted to a photocurrent $Y(t)$ by a photodetector.

While the full mapping $X \to Y$ is intrinsically nonlinear due to the atomic quantum dynamics, it can be linearized under the reasonable weak RF signal assumption, i.e.,
\begin{equation}
	\Omega_s \ll \Omega_c, \Omega_p, \gamma_3, \gamma_4,
	\label{eq:perturbation_hierarchy}
\end{equation}
where $\gamma_i$ is the spontaneous decay rate of state $|i\rangle$. Under this condition, we can abstract the equivalent channel model as
\begin{equation}
	Y = \mathcal{H}_q(\Omega_c, \Omega_p, \Delta) \cdot X + Z,
	\label{eq:channel_model}
\end{equation}
where $X$ and $Y$ are the normalized complex input and output, respectively. $Z$ denotes the equivalent additive noise bounded by photon shot noise and atomic spontaneous emission. The complex-valued coefficient $\mathcal{H}_q$, defined as the \textit{Quantum Coherence Transfer Coefficient (QCTC)}, encapsulates the linearized transduction gain. Crucially, $\mathcal{H}_q$ is a highly nonlinear function of the strong optical parameters $\Omega_c, \Omega_p$, and detunings $\Delta = \{\Delta_c, \Delta_p, \Delta_s\}$ of probe laser, coupling laser and RF signal, respectively, defined as the frequency offset from the transition resonance frequency $\omega_i$ of the target level $|i\rangle$.

This paper aims to derive the closed-form expressions of the nonlinear QCTC for atomic receiver, and to provide a unified model for characterizing its information-theoretic properties.

\subsection{Typical Quantum Interference Paths}
\subsubsection{V-Type Configuration}
The V-type EIT configuration has the energy level structure as follows 
\begin{itemize}
	\item $|1\rangle$: Ground state, no spontaneous decay, $\gamma_1 = 0$.
	\item $|2\rangle$: Excited state, coupled to $|1\rangle$ via the coupling laser with Rabi frequency $\Oc$, detuning $\Dc = \omega_2 - \omega_1 - \omega_c$.
	\item $|3\rangle$: Excited state, coupled to $|1\rangle$ via the probe laser with Rabi frequency $\Op$, detuning $\Dp = \omega_3 - \omega_1 - \omega_p$, and to $|4\rangle$ via the RF signal field.
	\item $|4\rangle$: RF-coupled state, coupled to $|3\rangle$ via the RF signal with Rabi frequency $\Os$, detuning $\Ds = \omega_4 - \omega_3 - \omega_s$.
\end{itemize}

The Hamiltonian for the V-type configuration is given by
\begin{equation}
	\bm{H}^{V} \!=\! \begin{pmatrix}
		0 & \Oc & \Op & 0 \\
		\Oc & -2\Dc & 0 & 0 \\
		\Op & 0 & -2\Dp & \Os \\
		0 & 0 & \Os & -2(\Dp\!+\!\Ds)
	\end{pmatrix}.
	\label{eq:hamiltonian_vtype}
\end{equation}


\subsubsection{$\Lambda$-Type Configuration}
The $\Lambda$-type EIT configuration has a metastable ground state $|2\rangle$ with negligible decay rate $\gamma_2 \approx 0$, which exhibits the following energy level structure
\begin{itemize}
	\item $|1\rangle$: Ground state, coupled to $|3\rangle$ via the probe laser with detuning $\Dp = \omega_3 - \omega_1 - \omega_p$.
	\item $|2\rangle$: Metastable ground state, coupled to $|3\rangle$ via the coupling laser with detuning $\Dc = \omega_3 - \omega_2 - \omega_c$.
	\item $|3\rangle$: Excited state, coupled to both ground states and the RF-coupled state $|4\rangle$.
	\item $|4\rangle$: RF-coupled state, coupled to $|3\rangle$ via the RF signal field with detuning $\Ds = \omega_4 - \omega_3 - \omega_s$.
\end{itemize}

The Hamiltonian for the $\Lambda$-type configuration is given by
\begin{equation}
	\bm{H}^{\Lambda} \!=\! \begin{pmatrix}
		0 & 0 & \Op & 0 \\
		0 & -2(\Dp\!-\!\Dc) & \Oc & 0 \\
		\Op & \Oc & -2\Dp & \Os \\
		0 & 0 & \Os & -2(\Dp\!+\!\Ds)
	\end{pmatrix},
	\label{eq:hamiltonian_lambdatype}
\end{equation}

\subsubsection{$\Xi$-Type Configuration}
The $\Xi$-type EIT configuration is the most widely studied in communication-related literature, with a cascaded energy level structure as follows
\begin{itemize}
	\item $|1\rangle$: Ground state, coupled to $|2\rangle$ via the probe laser with detuning $\Dp = \omega_2 - \omega_1 - \omega_p$.
	\item $|2\rangle$: Intermediate excited state, coupled to $|1\rangle$ via the probe laser and to $|3\rangle$ via the coupling laser with detuning $\Dc = \omega_3 - \omega_2 - \omega_c$.
	\item $|3\rangle$: Rydberg intermediate state, coupled to $|2\rangle$ via the coupling laser and to $|4\rangle$ via the RF signal field.
	\item $|4\rangle$: Higher Rydberg state, coupled to $|3\rangle$ via the RF signal field with detuning $\Ds = \omega_4 - \omega_3 - \omega_s$.
\end{itemize}

The Hamiltonian for the $\Xi$-type configuration is given by
\begin{equation}
	\bm{H}^{\text{$\Xi$}} \!=\! \begin{pmatrix}
		0 & \Op & 0 & 0 \\
		\Op & -2\Dp & \Oc & 0 \\
		0 & \Oc & -2(\Dp\!+\!\Dc) & \Os \\
		0 & 0 & \Os & -2(\Dp\!+\!\Dc\!+\!\Ds)
	\end{pmatrix}.
	\label{eq:hamiltonian_laddertype}
\end{equation}

The above Hamiltonians characterize the atom-field interaction dynamics for the three typical EIT configurations, laying the physical foundation for the analytical modeling. 

\section{Unified Analytical Channel Model for Atomic Response}
In this section, we first establish a unified steady-state perturbation framework tailored for EIT-based atomic receivers, then derive the closed-form expressions of QCTC under three EIT configurations for their equivalent analytical modeling.

\subsection{Unified Steady-State Perturbation Framework}
We adopt a perturbation hierarchy consistent with practical atomic receivers system, where the Rabi frequency $\Os$ of RF signal is the weakest energy scale. The perturbation hierarchy is given by
\begin{equation}
	\Os \ll \Oc, \quad \Os \ll \Op, \quad \Os \ll \gamma_3, \gamma_4.
	\label{eq:perturbation_hierarchy}
\end{equation}
Crucially, this hierarchy imposes no constraints on the relative magnitudes of $\Op$ with respect to $\Oc$ and the atomic decay rates, eliminating the error brought by weak-probe assumption.

Then, we expand all density matrix elements in powers of the weak RF signal $\Os$, i.e.,
\begin{equation}
	\rho_{ij} = \rho_{ij}^{(0)} + \rho_{ij}^{(1)} + \rho_{ij}^{(2)} + \cdots, \quad \rho_{ij}^{(n)} \sim \order{\Os^n},
	\label{eq:perturbation_expansion}
\end{equation}
where zero-order solution $\order{\Os^0}$ is the exact steady-state solution of the three-level EIT system when the RF signal is turned off, i.e., $\Os = 0$. In this case, the $|4\rangle$ completely decouples from the system, and all density matrix elements involving $|4\rangle$ vanish.
First-order correction $\order{\Os^1}$ is the signal-induced correction to the coherences involving the level $|4\rangle$.
Second-order correction $\order{\Os^2}$ is the leading-order correction to the probe coherence $\rho_{31}$ or $\rho_{21}$, which determines the output signal of the atomic receiver.

We first prove a lemma to simplify the perturbation analysis.
\begin{lemma}\label{lemma:first_order_pop_zero}
	For the three typical EIT configurations under the perturbation hierarchy \eqref{eq:perturbation_hierarchy}, the first-order corrections to the population terms are identically zero, i.e.,
	\begin{equation}\label{eq:first_order_pop_zero}
		\rho_{ii}^{(1)} = 0, \quad \forall i = 1,2,3,4.
	\end{equation}
\end{lemma}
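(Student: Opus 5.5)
The plan is to collect the first-order terms of the steady-state master equation \eqref{eq:lindblad_master} and show that the population equations carry no source at that order. The equations then reduce to a homogeneous problem whose only admissible solution is zero.

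Insert the expansion \eqref{eq:perturbation_expansion} into $\dot{\bm\rho}=0$ and keep the $\order{\Os^1}$ terms. The Hamiltonian splits as $\bm H=\bm H_0+\bm V$, where $\bm V$ contains only the entries $\Os$ at positions $(3,4)$ and $(4,3)$. This holds for \eqref{eq:hamiltonian_vtype}, \eqref{eq:hamiltonian_lambdatype} and \eqref{eq:hamiltonian_laddertype} alike, since in each case level $|4\rangle$ is reached only from $|3\rangle$. The detuning term $-2(\cdot+\Ds)$ in the $(4,4)$ entry is diagonal and sits in $\bm H_0$.

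The first-order equation then reads
\begin{equation*}
0=-\tfrac{i}{\hbar}[\bm H_0,\bm\rho^{(1)}]+\mathcal L[\bm\rho^{(1)}]-\tfrac{i}{\hbar}[\bm V,\bm\rho^{(0)}].
\end{equation*}
The source term $[\bm V,\bm\rho^{(0)}]$ is evaluated using the fact, stated after \eqref{eq:perturbation_expansion}, that every zero-order element involving $|4\rangle$ vanishes, so $\rho^{(0)}_{4j}=\rho^{(0)}_{j4}=0$.

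A direct computation gives $([\bm V,\bm\rho^{(0)}])_{jk}=\Os(\delta_{j3}\rho^{(0)}_{4k}+\delta_{j4}\rho^{(0)}_{3k})-\Os(\rho^{(0)}_{j3}\delta_{k4}+\rho^{(0)}_{j4}\delta_{k3})$. On the diagonal ($j=k$) the $(3,3)$ entry is $\Os(\rho^{(0)}_{43}-\rho^{(0)}_{34})=0$. The $(4,4)$ entry is $\Os(\rho^{(0)}_{34}-\rho^{(0)}_{43})=0$, and entries $(1,1),(2,2)$ are trivially zero. The only nonzero source entries are the coherences $\rho_{4k},\rho_{k4}$ for $k\in\{1,2,3\}$, with sources $\propto\Os\rho^{(0)}_{3k}$.

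Next, use a block structure to decouple the unknowns. Split the first-order unknowns into block $A$, the $3\times3$ block on $\{1,2,3\}$ plus $\rho_{44}$, and block $B$, the coherences $\rho_{4k},\rho_{k4}$ with $k\le3$.

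The operator $\mathcal L_0=-\frac{i}{\hbar}[\bm H_0,\cdot]+\mathcal L$ maps $A$ into $A$ and $B$ into $B$. The reason is that $\bm H_0$ has no matrix element connecting $|4\rangle$ to $\{1,2,3\}$. The Lindblad dissipator (decay $|4\rangle\to|3\rangle$ or to lower levels, plus dephasing) only feeds populations into populations and damps coherences. Hence coherences $\rho_{4k}$ never enter population equations, and vice versa.

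\textbf{Main obstacle.} The delicate point is verifying this for every decay channel allowed in the three configurations; I would check the standard form $\mathcal L$ term by term. Given the block structure, the block-$A$ first-order equations are homogeneous: $\mathcal L_0|_A\,\bm\rho^{(1)}_A=0$.

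The trace condition still has to be imposed to fix the solution. Normalization $\mathrm{Tr}\,\bm\rho=1$ holds at zero order, which forces $\sum_i\rho^{(1)}_{ii}=0$. With this constraint, uniqueness of the Lindblad steady state, i.e. a one-dimensional kernel of $\mathcal L_0|_A$, gives $\bm\rho^{(1)}_A=c\,\bm\rho^{(0)}_A$ with $c=0$. Uniqueness follows from the driven dissipative three-level system plus the decaying level $|4\rangle$ (with $\gamma_4>0$) having a unique stationary state, and I would cite this as standard.

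Therefore all first-order populations vanish, $\rho^{(1)}_{ii}=0$ for $i=1,\dots,4$, which proves \eqref{eq:first_order_pop_zero}. A parity argument gives the same picture: the first-order response lives only in the odd-in-$\Os$ coherences with $|4\rangle$, so populations first change at $\order{\Os^2}$.
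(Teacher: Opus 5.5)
Your proof is correct. It shares the paper's core idea: every zero-order element involving $|4\rangle$ vanishes, so the first-order equations for the populations carry no source; non-degeneracy plus the trace condition then forces zero. It organizes that idea differently.

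The paper works with the explicit $V$-type Bloch equations:
\begin{itemize}
\item it reads $\rho_{44}^{(1)}=0$ off the $\rho_{44}$ equation using $\rho_{43}^{(0)}=0$;
\item it ties $\rho_{33}^{(1)}$ to $\mathrm{Im}\,\rho_{31}^{(1)}$;
\item it asserts that the first-order equations for $\rho_{21}^{(1)},\rho_{31}^{(1)},\rho_{32}^{(1)}$ are homogeneous with the nonsingular zero-order coefficient matrix, so these coherences vanish;
\item it deduces $\rho_{22}^{(1)}=\rho_{33}^{(1)}=0$ and closes with the trace.
\end{itemize}

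You instead split Liouville space into block $A$ (populations and $\{1,2,3\}$-coherences) and block $B$ (the $|4\rangle$-coherences). You check that $-\tfrac{i}{\hbar}[\bm H_0,\cdot]+\mathcal L$ preserves both blocks and that $[\bm V,\bm\rho^{(0)}]$ lies entirely in $B$. You then use a one-dimensional kernel together with $\sum_i\rho_{ii}^{(1)}=0$.

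Your route has two advantages. First, it covers the $V$, $\Lambda$ and $\Xi$ configurations at once, since it only uses that $\Os$ occupies the $(3,4),(4,3)$ entries of all three Hamiltonians; the paper only treats $V$-type explicitly. Second, it handles the coupling between populations and coherences properly. At first order, the $\{1,2,3\}$-coherence equations are driven by the inversions $w_c^{(1)},w_p^{(1)}$, just as the zero-order ones are driven by $w_c,w_p$. So the paper's ``homogeneous, hence zero'' step, read literally, already presupposes vanishing populations. Making it rigorous requires solving coherences and populations jointly, which is exactly what your kernel-plus-trace argument does.

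The cost is that you take uniqueness of the unperturbed steady state (including $\gamma_4>0$) as a black box, and you must check the dissipator term by term against the block split. In the paper's concrete route, its nonsingularity claim plays the role of that uniqueness assumption.

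Your closing parity remark is a genuine bonus. Conjugation by $\mathrm{diag}(1,1,1,-1)$ flips the sign of $\Os$ while fixing $\bm H_0$ and $\mathcal L$, which shows that all odd-order population corrections vanish, not just the first-order ones.
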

\begin{IEEEproof}
	See Appendix A.
\end{IEEEproof}

Lemma \ref{lemma:first_order_pop_zero} indicates a physical interpretation that the population redistribution in the atomic system requires absorption and emission events, which involve the RF signal field to second order. At first order, the weak RF signal can only create quantum coherences, i.e., the off-diagonal density matrix elements, without changing the level populations.

With this lemma, we can now present our unified equivalent channel model.
\begin{definition}
	Let $\gamma_{ij}$ denote the decoherence rate between level $|i\rangle$ and $|j\rangle$, and $\Delta_k$ denote the detuning of field $k$.

	\noindent \textit{1. Complex Detuning-Decay Parameters (Bare States):}
	\begin{itemize}
		\item $D_p = \gamma_{ij} - i\Delta_p$: Probe Transition Detuning.
		\vspace{2pt}
		\item $D_c = \gamma_{kl} - i\Delta_c$: Coupling Transition Detuning.
		\item $D_{cp} = \gamma_{\text{TP}} - i(\Delta_p \pm \Delta_c)$: Two-Photon Detuning.
	\end{itemize}
	
	\vspace{2pt}
	
	\noindent \textit{2. Dressed Detuning Parameters (Optical Fields):}
	\begin{itemize}
		\item $\widetilde{D}_p = D_p + \Omega_c^2/(4 D_{cp}^{(\ast)})$:
		Probe transition energy dressed by the coupling field, where the conjugate symbol $(\ast)$ is configuration-specific.
		\vspace{2pt}
		\item $\widetilde{D}_c = D_c + \Omega_p^2/(4 D_{cp}^{(\ast)})$:
		Coupling transition energy dressed by the probe field, where the conjugate symbol $(\ast)$ is configuration-specific.
		\vspace{2pt}
		\item $\mathcal{D} = \widetilde{D}_p \widetilde{D}_c - \Omega_p^2 \Omega_c^2/16 |D_{cp}|^2$: Characteristic determinant of the steady-state optical Bloch equations for the three-level EIT background.
	\end{itemize}

	\vspace{2pt}
	\noindent \textit{3. RF-Coupled Transition Parameters:}
	\begin{itemize}
		\item $\mathfrak{d}_k$: Detuning-decay parameters for transitions involving the RF-coupled level $|4\rangle$. 
		
		\item $\widetilde{\mathfrak{d}}$: The effective detuning of the RF-coupled transition, dressed by both the probe and coupling optical fields.
	\end{itemize}
\end{definition}

\begin{theorem}[Unified Analytical Model]
	Under the perturbation hierarchy \eqref{eq:perturbation_hierarchy}, the equivalent channel gain (QCTC) of a four-level EIT-based atomic receiver can be given by
	\begin{equation}
		\mathcal{H}_q \triangleq \frac{\Delta \rho_{\text{probe}}}{\Omega_s^2} = \frac{-i \cdot \widetilde{D}_{\text{probe}}}{2 \mathcal{D}} \cdot \mathcal{T}_{41},
		\label{eq:unified_gain}
	\end{equation}
	 where $\Delta \rho_{\text{probe}}\!=\!\Delta \rho_{31}$ for V/$\Lambda$-type and $\Delta \rho_{21}$ for $\Xi$-type.  $\widetilde{D}_{\text{probe}}\!=\!\widetilde{D}_c^*$ for V-type, $\widetilde{D}_c$ for $\Lambda$-type, and  $\widetilde{D}_p^*$ for $\Xi$-type. $\mathcal{T}_{41} \equiv \rho_{41}^{(1)}/\Omega_s$ is the signal-to-coherence transfer coefficient. 
	\label{theorem:unified_channel}
\end{theorem}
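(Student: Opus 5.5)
We carry out the perturbation expansion \eqref{eq:perturbation_expansion} of the steady-state Lindblad equations order by order in $\Os$, using Lemma~\ref{lemma:first_order_pop_zero} to kill all first-order population terms.

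\textbf{Zeroth order.} At $\order{\Os^0}$ we recover the exact three-level EIT steady state on levels $\{|1\rangle,|2\rangle,|3\rangle\}$; level $|4\rangle$ is decoupled and every $\rho_{4j}^{(0)}$ vanishes. The only facts we need from this background are:
\begin{itemize}
\item the two one-photon coherences and the two-photon coherence form a closed linear system;
\item its $2\times2$ reduced determinant, obtained after eliminating the two-photon coherence through $D_{cp}$, is precisely $\mathcal{D}=\widetilde{D}_p\widetilde{D}_c-\Op^2\Oc^2/16|D_{cp}|^2$.
\end{itemize}

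\textbf{First order.} At $\order{\Os^1}$ the populations do not change, by Lemma~\ref{lemma:first_order_pop_zero}. The only nonzero first-order elements are the coherences $\rho_{4j}^{(1)}$ ($j=1,2,3$). They satisfy a closed linear system driven by $\Os\rho^{(0)}_{3j}$ and $\Os\rho^{(0)}_{jj}$. Solving it produces $\rho^{(1)}_{41}=\Os\mathcal{T}_{41}$, with the dressed RF denominator $\widetilde{\mathfrak{d}}$ built into $\mathcal{T}_{41}$. We do not need $\mathcal{T}_{41}$ explicitly; we only need its role as a source term.

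\textbf{Second order.} At $\order{\Os^2}$ the equations for the three-level block coherences $\{\rho^{(2)}_{\rm probe},\rho^{(2)}_{\rm coupling},\rho^{(2)}_{\rm two\text{-}photon}\}$ have the same homogeneous coefficient matrix as the zeroth-order problem. They acquire inhomogeneous sources of the form $\tfrac{i}{2}\Os\rho^{(1)}_{4j}$, which come from the commutator term $[\Os(|3\rangle\langle4|+|4\rangle\langle3|),\rho]$.

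\textbf{Key observation.} In every configuration the RF couples only to $|3\rangle$. Consequently the dominant source appears in the equation for the coherence with $|1\rangle$ that involves $|3\rangle$:
\begin{itemize}
\item for V/$\Lambda$ this is directly the probe coherence $\rho_{31}$;
\item for $\Xi$ it is the two-photon coherence $\rho_{31}$, which feeds $\rho_{21}$ via $\Oc$.
\end{itemize}
In each case the source is $\propto \Os\rho_{41}^{(1)}=\Os^2\mathcal{T}_{41}$.

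The second-order populations also change, but they enter the probe-coherence equation multiplied by the optical Rabi frequencies. Our plan is to show that, after using trace preservation, their net contribution can be absorbed into the same structure: they are themselves sourced by $\mathrm{Im}(\Os\rho^{(1)}_{43})$-type terms that reduce to $\mathcal{T}_{41}$ through the first-order relations. Alternatively, the theorem can be read as retaining the coherence channel only.

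\textbf{Cramer's rule.} We then solve the $2\times2$ reduced system, after eliminating the auxiliary coherence by dividing by $D_{cp}$ or $D_{cp}^*$. The probe coherence equals the source times the cofactor divided by the determinant. The cofactor is:
\begin{itemize}
\item $\widetilde{D}_c^*$ for V, since the coupling coherence enters conjugated because $\rho_{12}$ versus $\rho_{21}$ appears;
\item $\widetilde{D}_c$ for $\Lambda$;
\item $\widetilde{D}_p^*$ for $\Xi$, where the roles of the probe and two-photon coherences swap.
\end{itemize}
The determinant is $\mathcal{D}$. The prefactor $-i/2$ comes from the $\tfrac{i}{2}$ in the source together with the sign of the commutator. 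Dividing by $\Os^2$ gives \eqref{eq:unified_gain}.

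\textbf{Main obstacle.} The hardest step is bookkeeping the conjugations configuration by configuration, so that a single formula emerges. This means tracking which coherence ($\rho_{ij}$ or $\rho_{ji}$) appears in each equation. It also means verifying that the dressed parameters carry the correct $(\ast)$. I would handle this by writing each configuration's reduced $3\times3$ Bloch block explicitly and checking that the elimination reproduces the definitions of $\widetilde{D}_p,\widetilde{D}_c$. The second-order population feedback is the secondary risk.
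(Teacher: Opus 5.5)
Your route is the paper's own (Appendices~B--C): Lemma~\ref{lemma:first_order_pop_zero} removes the first-order populations, and the first-order $|4\rangle$-coherences solve a closed system. The second-order block coherences then obey the zero-order matrix with sources $-\frac{i}{2}\Os\rho_{4j}^{(1)}$, and Cramer's rule is applied with the population terms dropped. The paper simply declares the ``pop.'' terms negligible, which is your ``coherence channel only'' reading.

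\textbf{The failing step is your key observation.} Because $\bm{H}$ couples $|3\rangle$ to $|4\rangle$, the commutator inserts $-\frac{i}{2}\Os\rho_{4j}$ into the equation of \emph{every} coherence $\rho_{3j}$, not only $\rho_{31}$. In particular, the $\rho_{32}$ (or $\rho_{23}$) equation is sourced by $\Os\rho_{42}^{(1)}$ (or its conjugate).
In the V-type, $\rho_{32}$ is the two-photon coherence you eliminate through $D_{cp}$, so this source leaks into both rows of the reduced system. In the $\Lambda$- and $\Xi$-types, it is the coupling coherence, i.e.\ the second unknown itself.
Cramer's rule therefore returns $\Delta\rho_{\rm probe}$ as the $\rho_{41}^{(1)}$ term \emph{plus} a term proportional to $\Oc\rho_{42}^{(1)}$ (or its conjugate). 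The paper's Appendix~C displays exactly such a bracket, $\widetilde{D}_c\rho_{41}^{(1)}+\frac{\Oc\Op}{4D_{cp}^*}\rho_{42}^{(1)}$.
This extra term is generically nonzero. For the V-type, the first-order equation $\mathfrak{d}^{\mathrm{V}}_2\rho_{42}^{(1)}=\frac{i}{2}\Oc\rho_{41}^{(1)}-\frac{i}{2}\Os\rho_{32}^{(0)}$ gives $\rho_{42}^{(1)}$ its own source. This is the same relation that produces the $\Oc^2/4\mathfrak{d}^{\mathrm{V}}_2$ dressing and the $\rho_{32}^{(0)}$ pathway in Theorem~2.1.
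So the single-term formula is reached only by explicitly discarding this term. The paper asserts without computation that ``substituting $\rho_{42}^{(1)}$'' yields the unified form; your argument never sees the term at all.

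\textbf{Two further steps do not go through as written.}

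\emph{The $\Xi$-type cofactor.} If you eliminate $\rho_{31}$ as you propose, the $\rho_{41}^{(1)}$ source reaches the probe row only with an extra factor $\propto\Oc/D_{cp}$, and it also enters the coupling row. Moreover, in Cramer's rule the numerator for an unknown contains the \emph{other} row's diagonal ($\widetilde{D}_c$ here), never its own. The probe row's own diagonal is $D_p+\Oc^2/(4D_{cp})=\widetilde{D}_p$. Hence your mechanism cannot produce $\widetilde{D}_p^*$ with a bare prefactor $-i/2$. ``The roles swap'' needs an actual computation, and the paper's ``follows identically'' supplies none.

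\emph{The populations.} Your absorption plan cannot work. The second-order populations are driven by the real quantity $\Os\,\mathrm{Im}\,\rho_{43}^{(1)}$. Here $\rho_{43}^{(1)}$ contains both $\rho_{41}^{(1)}$ and an independent piece $\propto\Os\rho_{33}^{(0)}/\mathfrak{d}_3$. So these terms cannot be folded into the complex prefactor $-i\widetilde{D}_{\rm probe}/(2\mathcal{D})$. They also enter at the same order $\Os^2$ with weight $\Op$, so they are not small once $\Op$ is comparable to $\gamma$, which is the very regime the theorem targets.

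The only defensible version of your argument is your fallback, stated up front as an explicit approximation: the formula keeps the direct $\rho_{41}^{(1)}$ coherence channel and neglects both the $\rho_{42}^{(1)}$ channel and the population feedback.
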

\begin{IEEEproof}
	See Appendix B.
\end{IEEEproof}

Theorem \ref{theorem:unified_channel} provides a unified framework for the equivalent channel modeling of three types of EIT-based atomic receivers. The equivalent channel gain $\mathcal{H}_q$ is determined by the zero-order three-level EIT background and the first-order signal-induced coherence $\rho_{41}^{(1)}$. In the following subsections, we derive the closed-form expressions of $\mathcal{H}_q$ for each of the three EIT configurations.

\subsection{QCTC of the Three EIT Configurations}
\subsubsection{V-Type Configuration}
For the $V$-type configuration, we first specify the configuration-specific parameters.
The complex detuning-decay parameters are  
\begin{equation}
	D_p \!=\! \gamma_{13} - i\Delta_p, D_c \!=\! \gamma_{12} - i\Delta_c, D_{cp}\!=\! \gamma_{23} - i(\Delta_p \!-\! \Delta_c).
	\label{eq:vtype_detuning_defs}
\end{equation}
The RF-coupled transition parameters are specified as
\begin{align}
	\mathfrak{d}^{\text{V}}_1 &= \gamma_{14} - i(\Delta_p+\Delta_s), \quad \mathfrak{d}^{\text{V}}_2 = \gamma_{24} - i(\Delta_p+\Delta_s-\Delta_c), \nonumber \\
	\mathfrak{d}^{\text{V}}_3 &= \gamma_{34} - i\Delta_s, \quad \widetilde{\mathfrak{d}}^{\text{V}} = \mathfrak{d}^V_1 + \frac{\Oc^2}{4\mathfrak{d}^{\text{V}}_2} + \frac{\Op^2}{4\mathfrak{d}^{\text{V}}_3}.
	\label{eq:vtype_rydberg_defs}
\end{align}

Then, we present the closed-form equivalent channel model.
\renewcommand{\thetheorem}{2.1}
\begin{theorem}[V-Type QCTC]\label{V-Type QCTC}
	For the V-type atomic receiver, the signal-to-coherence transfer coefficient is given by
	\begin{equation}
		\mathcal{T}_{41}^{\text{V}} = \frac{1}{\widetilde{\mathfrak{d}}^{\text{V}}} \left( -\frac{i}{2} \rho_{31}^{(0)} + \frac{\Oc}{4\mathfrak{d}^{\text{V}}_2} \rho_{32}^{(0)} + \frac{\Op}{4\mathfrak{d}^{\text{V}}_3} \rho_{33}^{(0)} \right)
		\label{eq:chi41_vtype}
	\end{equation}
	where $\rho_{31}^{(0)}$, $\rho_{32}^{(0)}$, and $\rho_{33}^{(0)}$ are the zero-order steady-state coherences and population of the three-level EIT system, given by \eqref{eq:rho31_0}, \eqref{eq:rho32_0}, and \eqref{eq:pop_sol} in the Appendix A. 
	
	The closed-form QCTC of the $V$-type is obtained by substituting $\mathcal{T}_{41}^{V}$ into \eqref{eq:unified_gain}. 
	\label{theorem:vtype_channel}
\end{theorem}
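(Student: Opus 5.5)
We will derive $\mathcal{T}_{41}^{\text{V}}$ by writing the first-order Lindblad equations for the three coherences involving level $|4\rangle$ and the lower manifold, namely $\rho_{41}^{(1)}$, $\rho_{42}^{(1)}$ and $\rho_{43}^{(1)}$, and solving this closed $3\times 3$ linear system by elimination. From \eqref{eq:hamiltonian_vtype}, the commutator $-\tfrac{i}{\hbar}[\bm{H}^{V},\bm{\rho}]$ (with the factor $1/2$ convention implicit in the Hamiltonian) couples $\rho_{41}$ to $\rho_{42}$ through $\Oc$, to $\rho_{43}$ through $\Op$, and to $\rho_{31}$ through $\Os$. Similarly, $\rho_{42}$ couples back to $\rho_{41}$ through $\Oc$ and is sourced by $\Os\rho_{32}$, while $\rho_{43}$ couples back to $\rho_{41}$ through $\Op$ and is sourced by $\Os(\rho_{33}-\rho_{44})$. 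The Lindblad dissipator contributes only the dephasing terms $-\gamma_{4j}\rho_{4j}$ for these coherences, since no population feeds the $4j$ coherences. Together with the detunings read off the Hamiltonian diagonal, each equation has damping exactly $\mathfrak{d}^{\text{V}}_1,\mathfrak{d}^{\text{V}}_2,\mathfrak{d}^{\text{V}}_3$ of \eqref{eq:vtype_rydberg_defs}.

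Next we retain only $\order{\Os}$ terms in the steady state. The sources are $\Os$ times zero-order quantities, so they involve $\rho_{31}^{(0)},\rho_{32}^{(0)},\rho_{33}^{(0)}$, with $\rho_{44}^{(0)}=0$ because $|4\rangle$ decouples at $\Os=0$. Lemma~\ref{lemma:first_order_pop_zero} ensures that no first-order populations enter; the equations for $\rho_{4j}^{(1)}$ couple only to other $\rho_{4k}^{(1)}$, so the block closes. Other first-order coherences among $\{1,2,3\}$ vanish, since by a parity argument they are sourced only by $\rho_{4j}$ terms and hence begin at second order.

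The elimination proceeds as follows. Solve the $\rho_{42}$ equation as $\rho_{42}^{(1)}=(\text{source}_{2}-\tfrac{i}{2}\Oc\rho_{41}^{(1)})/\mathfrak{d}^{\text{V}}_2$, and likewise $\rho_{43}^{(1)}$ with $\Op$ and $\mathfrak{d}^{\text{V}}_3$. Substituting both into the $\rho_{41}$ equation produces the self-energy terms $\Oc^2/(4\mathfrak{d}_2^{\text{V}})$ and $\Op^2/(4\mathfrak{d}_3^{\text{V}})$, which assemble into $\widetilde{\mathfrak{d}}^{\text{V}}$. The remaining source terms yield $-\tfrac{i}{2}\rho_{31}^{(0)}$ directly, and $\tfrac{\Oc}{4\mathfrak{d}_2}\rho_{32}^{(0)}$ and $\tfrac{\Op}{4\mathfrak{d}_3}\rho_{33}^{(0)}$ via the two products of $\pm i/2$ factors. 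Dividing by $\Os$ then gives \eqref{eq:chi41_vtype}.

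The main obstacle is bookkeeping: fixing the sign and conjugation conventions (which of $\rho_{4j}$ or $\rho_{j4}$, and the orientation of the detuning signs) so that the damping parameters match $\mathfrak{d}^{\text{V}}_k$ exactly and the relative signs of the three source terms come out as stated. I would first fix the convention by checking the $\rho_{41}$ equation at $\Oc=\Op=0$, where it must reduce to $\mathfrak{d}_1\rho_{41}=-\tfrac{i}{2}\Os\rho_{31}$. The zero-order quantities are then taken from Appendix A, and the closed-form QCTC follows by substitution into Theorem~\ref{theorem:unified_channel}.
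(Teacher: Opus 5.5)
Your plan is exactly the route of the paper's Appendix C, which just states the resulting closed form before invoking Theorem 1: write the first-order equations for $\rho_{41}^{(1)},\rho_{42}^{(1)},\rho_{43}^{(1)}$, eliminate $\rho_{42}^{(1)}$ and $\rho_{43}^{(1)}$ into the $\rho_{41}^{(1)}$ equation to assemble $\widetilde{\mathfrak{d}}^{\text{V}}$, and divide by $\Os$. One sign needs fixing: the commutator with $\bm{H}^{V}$ gives $\mathfrak{d}^{\text{V}}_2\rho_{42}^{(1)} = -\frac{i}{2}\Os\rho_{32}^{(0)} + \frac{i}{2}\Oc\rho_{41}^{(1)}$ and $\mathfrak{d}^{\text{V}}_3\rho_{43}^{(1)} = -\frac{i}{2}\Os\rho_{33}^{(0)} + \frac{i}{2}\Op\rho_{41}^{(1)}$, so the back-coupling carries a plus sign, not the minus you wrote, and this is what makes both the self-energies and the $\rho_{32}^{(0)}$, $\rho_{33}^{(0)}$ terms come out with the stated signs.
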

\begin{IEEEproof}
	See Appendix C.
\end{IEEEproof}

\begin{remark}
	The closed-form expression in Theorem \ref{theorem:vtype_channel} reveals three distinct physical pathways through which the RF signal modulates the probe coherence: (a) the direct pathway via the pre-existing probe coherence $\rho_{31}^{(0)}$, (b) the coupling-mediated pathway through the EIT two-photon coherence $\rho_{32}^{(0)}$, and (c) the probe-mediated pathway via the probe-induced population $\rho_{33}^{(0)}$. All three pathways are fully captured in our model.
\end{remark}

\subsubsection{$\Lambda$-Type Configuration}
For the $\Lambda$-type configuration, the complex detuning-decay parameters are given by
\begin{equation}
	D_p \!=\! \gamma_{13} - i\Delta_p,  D_c = \gamma_{23} - i\Delta_c, 
	D_{cp} \!=\! \gamma_{12} - i(\Delta_p \!-\! \Delta_c).
	\label{eq:lambdatype_detuning_defs}
\end{equation}
The RF-coupled transition parameters are specified as
\begin{align}
	&\mathfrak{d}^\Lambda_1 = \gamma_{14} - i(\Delta_p+\Delta_s), \quad \mathfrak{d}^\Lambda_2 = \gamma_{24} - i(\Delta_c+\Delta_s),\nonumber \\
	& \mathfrak{d}^\Lambda_3 = \gamma_{34} - i\Delta_s,
	\widetilde{\mathfrak{d}}^\Lambda=\mathfrak{d}^\Lambda_{1}+\frac{\Op^2}{4(\mathfrak{d}^\Lambda_{3}+\frac{\Oc^2}{4\mathfrak{d}^\Lambda_{2}})}.
	\label{eq:lambdatype_rydberg_defs}
\end{align}

Then, we present the closed-form equivalent channel model for the $\Lambda$-type configuration.
\renewcommand{\thetheorem}{2.2}
\begin{theorem}[$\Lambda$-Type QCTC]
	For the $\Lambda$-type atomic receiver, the signal-to-coherence transfer coefficient is given by
	\begin{equation}
		\mathcal{T}^{\Lambda}_{41}=\frac{1}{\widetilde{\mathfrak{d}^{\Lambda}}}\left(-\frac{i}{2}\rho_{31}^{(0)}\!+\!\frac{i\Oc\Op}{8\mathfrak{d}^{\Lambda}_{2}(\mathfrak{d}^\Lambda_{3}\!+\!\frac{\Oc^2}{4\mathfrak{d}^\Lambda_{2}})}\rho_{32}^{(0)}\!+\!\frac{\Op}{4(\mathfrak{d}^\Lambda_{3}\!+\!\frac{\Oc^2}{4\mathfrak{d}^\Lambda_{2}})}\rho_{33}^{(0)}\right)
	\end{equation}
	The zero-order steady-state quantities $\rho_{31}^{(0)}$, $\rho_{32}^{(0)}$ and $\rho_{33}^{(0)}$ are derived from the three-level $\Lambda$-type EIT system.
	
	The closed-form QCTC of the $\Lambda$-type is obtained by substituting $\mathcal{T}_{41}^{\Lambda}$ into \eqref{eq:unified_gain}. 
	\label{theorem:lambdatype_channel}
\end{theorem}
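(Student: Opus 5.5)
The plan is to write the first-order optical Bloch equations for the three coherences that involve level $|4\rangle$, namely $\rho_{41}$, $\rho_{42}$ and $\rho_{43}$. Using $\dot{\bm{\rho}}=-\frac{i}{\hbar}[\bm{H}^{\Lambda},\bm{\rho}]+\mathcal{L}[\bm{\rho}]$ with the Hamiltonian \eqref{eq:hamiltonian_lambdatype}, I collect the $\order{\Os}$ terms and set the time derivative to zero.

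At zero order every element involving $|4\rangle$ vanishes, including $\rho_{44}^{(0)}=0$. By Lemma~\ref{lemma:first_order_pop_zero} the first-order populations are also zero. The only source terms at order $\Os$ therefore come from the RF entry $H_{34}=H_{43}=\Os$ acting on the zero-order three-level background. Concretely:
\begin{itemize}
\item the $\rho_{41}$ equation is driven by $\Os\rho_{31}^{(0)}$;
\item the $\rho_{42}$ equation is driven by $\Os\rho_{32}^{(0)}$;
\item the $\rho_{43}$ equation is driven by $\Os(\rho_{33}^{(0)}-\rho_{44}^{(0)})=\Os\rho_{33}^{(0)}$.
\end{itemize}

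The optical fields couple these three unknowns among themselves. The probe entry $H_{13}$ links $\rho_{41}$ with $\rho_{43}$, and the coupling entry $H_{23}$ links $\rho_{42}$ with $\rho_{43}$. There is no direct coupling between $\rho_{41}$ and $\rho_{42}$, because in the $\Lambda$ scheme levels $|1\rangle$ and $|2\rangle$ are not connected to each other. With the decay-detuning parameters $\mathfrak{d}^\Lambda_{1,2,3}$ of \eqref{eq:lambdatype_rydberg_defs}, this gives a $3\times 3$ linear system with the chain structure
\[
\rho_{41}\;\leftrightarrow\;\rho_{43}\;\leftrightarrow\;\rho_{42}.
\]

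I solve the system by successive elimination from the end of the chain. First, the $\rho_{42}$ equation gives
\[
\rho_{42}^{(1)}=\frac{(\text{source}\ \propto \Os\rho_{32}^{(0)})+(\text{term}\ \propto \Oc\,\rho_{43}^{(1)})}{\mathfrak{d}^\Lambda_2}.
\]
Substituting this into the $\rho_{43}$ equation dresses $\mathfrak{d}^\Lambda_3$ into $\mathfrak{d}^\Lambda_3+\Oc^2/(4\mathfrak{d}^\Lambda_2)$. It also adds the mixed source $i\Oc\rho_{32}^{(0)}/(8\mathfrak{d}_2^\Lambda)$ next to the population source $\rho_{33}^{(0)}$. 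Next I solve for $\rho_{43}^{(1)}$ and substitute into the $\rho_{41}$ equation. This produces the second dressing $\Op^2/\bigl(4(\mathfrak{d}^\Lambda_3+\Oc^2/(4\mathfrak{d}^\Lambda_2))\bigr)$, so the total coefficient of $\rho_{41}^{(1)}$ is exactly $\widetilde{\mathfrak{d}}^\Lambda$. The right-hand side becomes $-\tfrac{i}{2}\rho_{31}^{(0)}$ plus $\Op/\bigl(4(\cdots)\bigr)$ times the combined source of $\rho_{43}$. Dividing by $\Os\,\widetilde{\mathfrak{d}}^\Lambda$ then yields $\mathcal{T}^\Lambda_{41}$ in the stated form, which is the same nested continued-fraction structure as in Theorem~\ref{theorem:vtype_channel}.

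The main obstacle is bookkeeping rather than anything conceptual. The factors of $1/2$ in $\bm{H}$ have to be tracked carefully, since the Hamiltonian carries the half-Rabi convention through the factors of $2$ on the diagonal. The same care is needed for the signs of $i$ in each commutator term, and for the conjugation choices: I must decide which of $\rho_{4j}$ or $\rho_{j4}$ closes the system, using Hermiticity $\rho_{ij}=\rho_{ji}^*$, so that the detunings enter as $\mathfrak{d}^\Lambda_k$ rather than their conjugates. I would fix these conventions by first checking the $\rho_{41}$ equation against the direct-pathway coefficient $-\tfrac{i}{2}\rho_{31}^{(0)}$ in the V-type result, and then reuse the same sign pattern for the other two equations. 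Finally I would confirm that the dephasing rates $\gamma_{14},\gamma_{24},\gamma_{34}$ enter only through the diagonal terms $\mathfrak{d}^\Lambda_k$, and that the population-transfer (repopulation) terms of $\mathcal{L}$ do not enter at first order, which again follows from Lemma~\ref{lemma:first_order_pop_zero}.
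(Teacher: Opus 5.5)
Your approach is the paper's own. The $\Lambda$-type proof defers to Appendix C, which solves the closed first-order system for $\rho_{41}^{(1)},\rho_{42}^{(1)},\rho_{43}^{(1)}$. Your chain elimination $\rho_{42}\to\rho_{43}\to\rho_{41}$, with the couplings and dressings you identify, does reproduce the stated $\widetilde{\mathfrak{d}}^\Lambda$ and $\mathcal{T}^\Lambda_{41}$.

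One intermediate coefficient in your sketch is off by a factor of $4$, however. The steady-state first-order equations are
\begin{align*}
\mathfrak{d}^\Lambda_1\rho_{41}^{(1)} &= -\tfrac{i}{2}\Os\rho_{31}^{(0)}+\tfrac{i}{2}\Op\rho_{43}^{(1)},\\
\mathfrak{d}^\Lambda_2\rho_{42}^{(1)} &= -\tfrac{i}{2}\Os\rho_{32}^{(0)}+\tfrac{i}{2}\Oc\rho_{43}^{(1)},\\
\mathfrak{d}^\Lambda_3\rho_{43}^{(1)} &= -\tfrac{i}{2}\Os\rho_{33}^{(0)}+\tfrac{i}{2}\Op\rho_{41}^{(1)}+\tfrac{i}{2}\Oc\rho_{42}^{(1)}.
\end{align*}
Eliminating $\rho_{42}^{(1)}$ therefore adds $\frac{\Oc\Os}{4\mathfrak{d}^\Lambda_2}\rho_{32}^{(0)}$ next to $-\tfrac{i}{2}\Os\rho_{33}^{(0)}$. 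Inserting $\rho_{43}^{(1)}$ into the first equation through $\tfrac{i}{2}\Op\rho_{43}^{(1)}$ gives the prefactor $\Op\Os/\bigl(4(\mathfrak{d}^\Lambda_3+\Oc^2/(4\mathfrak{d}^\Lambda_2))\bigr)$. That prefactor multiplies $\rho_{33}^{(0)}+\frac{i\Oc}{2\mathfrak{d}^\Lambda_2}\rho_{32}^{(0)}$, not $\rho_{33}^{(0)}+\frac{i\Oc}{8\mathfrak{d}^\Lambda_2}\rho_{32}^{(0)}$. With your value you would end up with $i\Oc\Op/32$ instead of the stated $i\Oc\Op/8$.

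Two smaller points:
\begin{itemize}
\item The three $\rho_{4j}$ close among themselves. No $\rho_{j4}$ appears, so the conjugation choice you worry about never arises.
\item Lemma~\ref{lemma:first_order_pop_zero} is not actually needed here. Every source term already carries a factor $\Os$, so only zero-order quantities and $\rho_{44}^{(0)}=0$ enter at this order.
\end{itemize}
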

\begin{IEEEproof}
	Proof omitted, similar to Theorem 2.1.
\end{IEEEproof}

\begin{remark}
	The $\Lambda$-type retains all three signal transduction pathways but with a modified coupling-mediated pathway. It changes from a first-order term $\Omega_c \rho_{32}^{(0)}$ in V-type to a second-order cross term $\Omega_c \Omega_p \rho_{32}^{(0)}$ in $\Lambda$-type. The long coherence lifetime of the metastable ground state $|2\rangle$ enables an ultra-narrow EIT window, making the $\Lambda$-type suitable for high-sensitivity RF sensing applications.
\end{remark}

\subsubsection{$\Xi$-Type Configuration}
For the $\Xi$-type configuration, the complex detuning-decay parameters are
\begin{equation}
	D_p \!=\! \gamma_{12} - i\Delta_p, D_c = \gamma_{23} - i\Delta_c, 
	D_{cp} \!=\! \gamma_{13} - i(\Delta_p \!+\! \Delta_c).
	\label{eq:xitype_detuning_defs}
\end{equation}
The RF-coupled transition parameters are specified as
\begin{align}
	\mathfrak{d}^{\Xi}_1 &= \gamma_{14} - i(\Delta_p+\Delta_c+\Delta_s), \quad \mathfrak{d}^{\Xi}_2 = \gamma_{24} - i(\Delta_c+\Delta_s), \nonumber \\
	\mathfrak{d}^{\Xi}_3 &= \gamma_{34} - i\Delta_s, \quad 
	\widetilde{\mathfrak{d}}^{\Xi} = \mathfrak{d}^{\Xi}_1 + \frac{\Op^2}{4(\mathfrak{d}^{\Xi}_2+\frac{\Oc^2}{4\mathfrak{d}^{\Xi}_3})}.
	\label{eq:xitype_rydberg_defs}
\end{align}
Then, we present the closed-form equivalent channel model for the $\Xi$--type configuration.
\renewcommand{\thetheorem}{2.3}
\begin{theorem}[$\Xi$-Type QCTC]
	For the $\Xi$-type atomic receiver, the signal-to-coherence transfer coefficient is given by
	\begin{equation}
		\mathcal{T}_{41}^{\Xi} =\frac{1}{\widetilde{\mathfrak{d}}^{\Xi}}\left(-\frac{i}{2}\rho_{31}^{(0)}\!+\!\frac{\Op}{4(\mathfrak{d}^{\Xi}_2+\frac{\Oc^2}{4\mathfrak{d}^{\Xi}_3})}\rho_{32}^{(0)}\!+\!\frac{i\Oc\Op}{8\mathfrak{d}^{\Xi}_{3}(\mathfrak{d}^{\Xi}_2+\frac{\Oc^2}{4\mathfrak{d}^{\Xi}_3})}\rho_{33}^{(0)}\right)
		\label{eq:chi41_xitype}
	\end{equation}
	The zero-order steady-state quantities $\rho_{31}^{(0)}$, $\rho_{32}^{(0)}$ and $\rho_{33}^{(0)}$ are derived from the three-level $\Xi$-type EIT system.
	
	The closed-form QCTC of the $\Xi$-type is obtained by substituting $\mathcal{T}_{41}^{\Xi}$ into \eqref{eq:unified_gain}.
	\label{theorem:xitype_channel}
\end{theorem}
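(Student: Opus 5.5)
We derive the first-order equations for the coherences $\rho_{41},\rho_{42},\rho_{43}$ of the $\Xi$-type system, in the same way as for Theorem~\ref{theorem:vtype_channel}. We insert the expansion \eqref{eq:perturbation_expansion} into the steady-state master equation \eqref{eq:lindblad_master} with Hamiltonian \eqref{eq:hamiltonian_laddertype} and keep only terms of order $\Os^1$. At zeroth order, every element involving $|4\rangle$ vanishes. By Lemma~\ref{lemma:first_order_pop_zero}, no first-order population appears. The first-order coherences between $|1\rangle,|2\rangle,|3\rangle$ are not sourced at linear order, since $\Os$ couples only to $|4\rangle$, and we set them to zero. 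Level $|4\rangle$ couples only to $|3\rangle$ through $\Os$. Its row equations, $\dot\rho_{4j}=0$ for $j=1,2,3$, then form a closed linear $3\times 3$ system in $\rho_{41}^{(1)},\rho_{42}^{(1)},\rho_{43}^{(1)}$. The inhomogeneous terms come from $\Os\rho_{3j}^{(0)}$.

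Writing the commutator with the factor $1/2$ implicit in the Hamiltonian, the system has the form
\begin{align*}
\mathfrak{d}^{\Xi}_1\rho_{41}^{(1)} &= -\tfrac{i}{2}\Os\rho_{31}^{(0)} + \tfrac{i}{2}\Op\rho_{42}^{(1)},\\
\mathfrak{d}^{\Xi}_2\rho_{42}^{(1)} &= -\tfrac{i}{2}\Os\rho_{32}^{(0)} + \tfrac{i}{2}\Op\rho_{41}^{(1)} + \tfrac{i}{2}\Oc\rho_{43}^{(1)},\\
\mathfrak{d}^{\Xi}_3\rho_{43}^{(1)} &= -\tfrac{i}{2}\Os(\rho_{33}^{(0)}-\rho_{44}^{(0)}) + \tfrac{i}{2}\Oc\rho_{42}^{(1)},
\end{align*}
with $\rho_{44}^{(0)}=0$. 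The detunings follow from the diagonal entries of \eqref{eq:hamiltonian_laddertype}. For example, the phase of $\rho_{41}$ is set by $\Dp+\Dc+\Ds$, which gives exactly $\mathfrak{d}^{\Xi}_1$ in \eqref{eq:xitype_rydberg_defs}. The chain structure is $|1\rangle\!-\!|2\rangle$ via $\Op$ and $|2\rangle\!-\!|3\rangle$ via $\Oc$. It means $\rho_{41}$ couples only to $\rho_{42}$, $\rho_{42}$ couples to both neighbours, and $\rho_{43}$ couples only to $\rho_{42}$.

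We solve by successive elimination, which produces the continued fraction in $\widetilde{\mathfrak{d}}^{\Xi}$.
\begin{enumerate}
\item Solve the third equation for $\rho_{43}^{(1)}$ and substitute into the second. This renormalizes $\mathfrak{d}^{\Xi}_2$ to $\mathfrak{d}^{\Xi}_2+\Oc^2/(4\mathfrak{d}^{\Xi}_3)$. It also adds to the source a term $\propto \Oc\Os\rho_{33}^{(0)}/\mathfrak{d}^{\Xi}_3$, which carries the factor $i\cdot i$ that produces the stated sign.
\item Express $\rho_{42}^{(1)}$ as a combination of $\rho_{41}^{(1)}$, $\rho_{32}^{(0)}$ and $\rho_{33}^{(0)}$, divided by that renormalized parameter.
\item Substitute into the first equation. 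Moving the $\rho_{41}^{(1)}$ term to the left gives $\widetilde{\mathfrak{d}}^{\Xi}\rho_{41}^{(1)}$. Dividing by $\Os$ then yields \eqref{eq:chi41_xitype}, with the three source terms attached to $\rho_{31}^{(0)}$, $\rho_{32}^{(0)}$ and $\rho_{33}^{(0)}$.
\end{enumerate}

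The main obstacle is sign and conjugation bookkeeping. This covers whether the source involves $\rho_{3j}$ or $\rho_{j3}$, the $\pm i/2$ factors in $[\bm H,\bm\rho]_{4j}$, and the sign convention of the detunings inherited from the $-2\Delta$ diagonal entries. Together these determine the factors $-i/2$, $\Op/4(\cdot)$ and $i\Oc\Op/8(\cdot)$ in \eqref{eq:chi41_xitype}.

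We check these conventions in two ways:
\begin{itemize}
\item Against the V-type result of Theorem~\ref{theorem:vtype_channel}, where the same elimination with a star-shaped rather than chain-shaped coupling gives the additive form of $\widetilde{\mathfrak{d}}^{\text{V}}$.
\item Against the limiting case $\Op\to0$, where $\widetilde{\mathfrak{d}}^{\Xi}\to\mathfrak{d}^{\Xi}_1$ and only the direct pathway survives.
\end{itemize}
The zero-order quantities remain symbolic, supplied by the three-level $\Xi$ solution. Substituting into \eqref{eq:unified_gain} with $\widetilde D_{\text{probe}}=\widetilde D_p^*$ completes the QCTC.
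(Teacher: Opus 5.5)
Your proposal is correct and follows the paper's route, which defers to the V-type argument: you set up the closed first-order linear system for $\rho_{41}^{(1)},\rho_{42}^{(1)},\rho_{43}^{(1)}$, sourced by $\Os\rho_{3j}^{(0)}$, and solve it explicitly. Your equations and the successive elimination along the $\Xi$ chain reproduce exactly the continued fraction $\widetilde{\mathfrak{d}}^{\Xi}$ and the three coefficients $-\tfrac{i}{2}$, $\Op/(4E)$ and $i\Oc\Op/(8\mathfrak{d}^{\Xi}_3E)$, where $E=\mathfrak{d}^{\Xi}_2+\Oc^2/(4\mathfrak{d}^{\Xi}_3)$.
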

\begin{IEEEproof}
	Proof omitted, similar to Theorem 2.1.
\end{IEEEproof}

\begin{remark}
	The analytical model of $\Xi$-type is fully consistent with existing literature under the weak-probe approximation, i.e., $\Op \to 0$, the probe-induced coherences $\rho_{31}^{(0)}$ and $\rho_{32}^{(0)}$ vanish, and \eqref{eq:chi41_xitype} reduces exactly to the first-order signal-to-coherence transfer coefficient of the standard weak-probe $\Xi$-type atomic receiver model. The proposed framework retains all high-order terms induced by finite probe power, extending the model's validity to arbitrary probe laser intensities.
\end{remark}

\subsection{Comparative Analysis}

The $V$-type configuration has the most complex model due to three first-order direct transduction pathways, which endows it with the highest tunability via coupling and probe lasers. The $\Lambda$-type configuration features a more compact form with a modified second-order coupling-mediated pathway, and its ultra-narrow EIT window from metastable ground-state long coherence lifetime makes it ideal for high-sensitivity applications. The $\Xi$-type configuration has the most concise closed-form model with minimal cross-coupling terms in its cascade transition structure, and it is the most widely adopted configuration in existing atomic receiver experiments for wireless communications.

\section{Numerical Results}
We validate the proposed unified analytical model via the comparisons with numerical simulations. The Rabi frequency of coupling laser is set to $\Omega_c = 2\pi \times 20$ MHz, and the RF signal Rabi frequency $\Omega_s$ scanned across the weak-signal regime defined by the perturbation hierarchy in (\ref{eq:perturbation_hierarchy}). The analytical results are calculated directly from the closed-form QCTC expressions derived in Section III, while the numerical results are obtained from the exact steady-state solutions of the Lindblad master equation without any approximation.

\begin{figure*}[!ht]
	\centering
	\subfloat[$V$-Type @ $\Delta_p=0$ MHz ]{\includegraphics[width=.5\columnwidth]{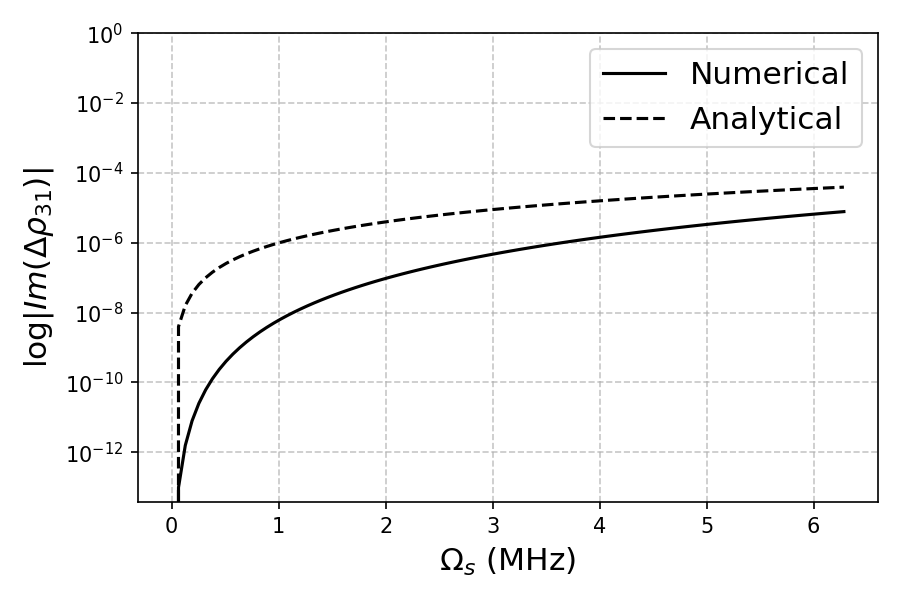} \label{subfig:v_0}} 
	\subfloat[$V$-Type @ $\Delta_p=2\pi $ MHz]{\includegraphics[width=.5\columnwidth]{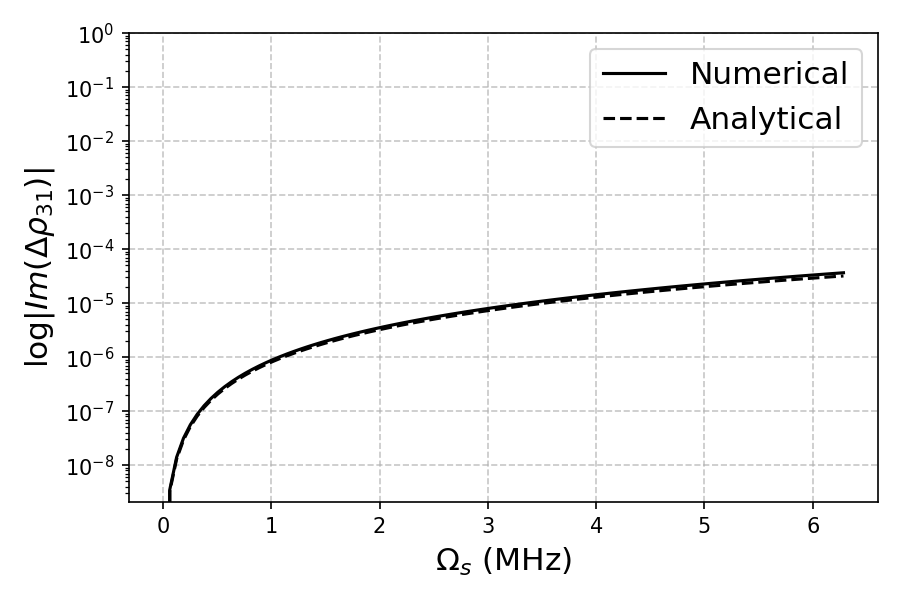} \label{subfig:v_2pi}} 
	\subfloat[$\Lambda$-Type @  $\Delta_p=0$ MHz]{\includegraphics[width=.5\columnwidth]{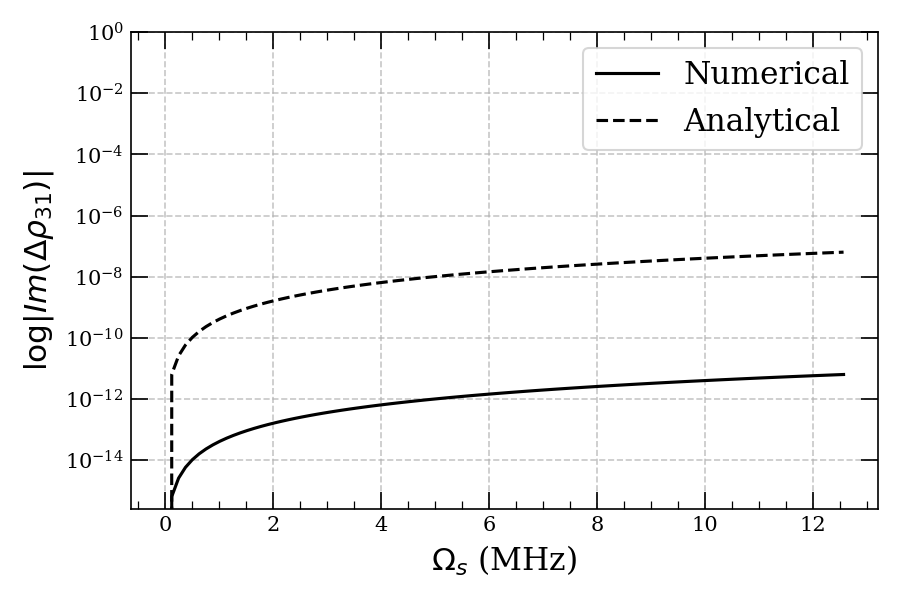} \label{subfig:lambda_0}} 
	\subfloat[$\Lambda$-Type @ $\Delta_p=2\pi $ MHz]{\includegraphics[width=.5\columnwidth]{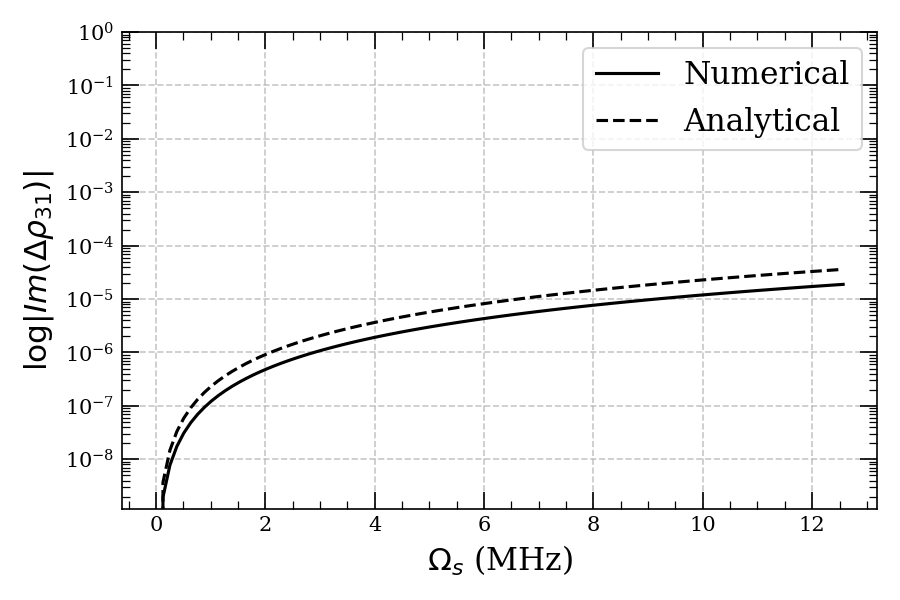} \label{subfig:lambda_2pi}} 
	\caption{Comparisons of the proposed QCTC analytical model and numerical simulation. These curves demonstrate the variations of probe laser coherence response with Rabi frequency of RF signal under different operation point $\Delta_p$.}
	\label{fig:model_validation}
\end{figure*}

Fig. \ref{fig:model_validation} compares the analytical and numerical results of the probe coherence response $|\text{Im}(\Delta\rho_{31})|$ for both V-type and $\Lambda$-type configurations. For the probe resonance case shown in Figs. \ref{subfig:v_0} and \ref{subfig:lambda_0}, our analytical model accurately captures the scaling trend of the probe coherence response with respect to $\Omega_s$. For the probe off-resonance case shown in Figs. \ref{subfig:v_2pi} and \ref{subfig:lambda_2pi}, the analytical results are in near-perfect agreement with the full numerical solutions across the entire weak-signal range. This directly validates that our proposed framework maintains high accuracy across different probe detuning conditions.
For the $\Xi$-type configuration, its closed-form QCTC expression is fully derived in Section III, and its steady-state EIT response characteristics have been validated in existing atomic receiver literature. We thus focus our numerical validation on the V-type and $\Lambda$-type configurations.

\section{Conclusion}
In this paper, we propose a unified analytical framework for EIT-based atomic receivers that eliminates the restrictive weak-probe assumption of prior models and aligns with practical experimental conditions. We derive, for the first time, closed-form equivalent channel gain models for all three canonical EIT configurations, i.e., $V$-type, $\Lambda$-type, and $\Xi$-type.
This work establishes an information-theoretic foundation for atomic communication system analysis and design. Future work will derive capacity bounds and analyze equivalent noise statistics to optimize atomic transceiver performance.

\appendices
\section{Proof of Lemma \ref{lemma:first_order_pop_zero}}
We first derive the zero-order steady-state solution of the three-level EIT system based on the Lindblad master equation.
Taking the V-type system as an example, eliminating the coherence $\rho_{32}^{(0)}$ and substituting it into the equations for $\rho_{21}^{(0)}$ and $\rho_{31}^{(0)}$, we can obtain the zero-order coherences as
\begin{align}
	\rho_{31}^{(0)} &= \frac{-i}{2\mathcal{D}^*} \left( \Op \widetilde{D}_c^* w_p - \frac{\Oc^2 \Op}{4 D_{cp}} w_c \right), \label{eq:rho31_0}\\
	\rho_{32}^{(0)} &= \frac{1}{D_{cp}}\left(\frac{-i\Op}{2}\rho_{12}^{(0)}+\frac{i\Oc}{2}\rho_{31}^{(0)}\right), \label{eq:rho32_0}
\end{align}
where $w_c = \rho_{11}^{(0)} - \rho_{22}^{(0)}$ and $w_p = \rho_{11}^{(0)} - \rho_{33}^{(0)}$ denote the population inversions.

The coupled linear differential equations are then simplified into a $2\times 2$ linear system for $(\rho_{22}^{(0)},\rho_{33}^{(0)})$ as
\begin{equation}
	(\gamma_2+2R_c-R_{cp})\rho_{22}^{(0)}+(R_c-2R_{cp})\rho_{33}^{(0)} = R_c-R_{cp},
\end{equation}
\begin{equation}
	(R_p-2R_{cp})\rho_{22}^{(0)}+(\gamma_3+2R_p-2R_{pc})\rho_{33}^{(0)} = R_p-R_{pc}.
\end{equation}
where $R_{c}\triangleq \Omega_c^2/2 {\rm Re}[\widetilde{D}_p^{*}/\mathcal{D}]$, $R_{p}\triangleq \Omega_p^2/2{\rm Re}[\widetilde{D}_c^{*}/\mathcal{D}]$, $R_{cp}\triangleq\Omega_c^2\Omega_p^2/8{\rm Re}[1/D_{cp}^{*}\mathcal{D}]$, $R_{pc}\triangleq\Omega_c^2\Omega_p^2/8{\rm Re}[1/D_{cp}\mathcal{D}^{*}]$.

Then, the zero-order populations are given by
\begin{equation}
	\begin{split}
			&\rho_{22}^{(0)} = \frac{A_{33} b_2 - A_{23} b_3}{A_{22} A_{33} - A_{23} A_{32}}, \rho_{33}^{(0)} = \frac{A_{22} b_3 - A_{32} b_2}{A_{22} A_{33} - A_{23} A_{32}}.
		\label{eq:pop_sol}
	\end{split}
\end{equation}
where $A_{22}\triangleq \gamma_2+2R_c-R_{cp}, A_{33}\triangleq \gamma_3+2R_p-2R_{pc},A_{23}\triangleq R_c-2R_{cp}, A_{32}\triangleq R_p-2R_{cp}$, $b_2\triangleq R_c-R_{cp}, b_3\triangleq R_p-R_{pc}$.

Considering the first-order extension of density matrix elements with respect to $\Os$, according to the diagonal population equations derived from the Lindblad master equation,
since the $\rho_{43}^{(0)}=0$,  we have $\rho_{44}^{(1)} = 0$ , and $\Op\,\text{Im}(\rho_{31}^{(1)}) = -\gamma_3\,\rho_{33}^{(1)}$.
The first-order off-diagonal equations for $\rho_{21}^{(1)}$, $\rho_{31}^{(1)}$, $\rho_{32}^{(1)}$ form a homogeneous linear system with the same coefficient matrix as the zero-order system, which is non-singular for all physically meaningful parameters. Therefore, the only solution is the trivial solution, i.e.,
\begin{equation}
	\rho_{21}^{(1)} = \rho_{31}^{(1)} = \rho_{32}^{(1)} = 0,
\end{equation}
which implies $\rho_{22}^{(1)} = \rho_{33}^{(1)} = 0$. From the trace constraint $\sum_{i=1}^4 \rho_{ii}^{(1)} = 0$, we get $\rho_{11}^{(1)} = 0$. This completes the proof.

\section{Proof of Theorem \ref{theorem:unified_channel}}
Owing to space constraints, we outline the universal derivation route, taking V/$\Lambda$-type (probe coherence $\rho_{31}$) as an example. The $\Xi$-type follows identically with $\rho_{21}$ and $\rho_{42}^{(1)}$. Detailed derivations are in the Appendix C.
By Lemma \ref{lemma:first_order_pop_zero}, all first-order populations vanish identically. Only Rydberg coherences $\rho_{41}^{(1)}, \rho_{42}^{(1)}, \rho_{43}^{(1)}$ are non-trivial, and solving their linear system gives the primary Rydberg coherence $\rho_{4k}^{(1)}$ in terms of zero-order EIT solutions.
The leading-order contribution is from $-\frac{i\Omega_s}{2}\rho_{4k}^{(1)}$. Solving via Cramer's rule gives
\begin{equation}
	\Delta \rho_{31} = \rho_{31}^{(2)} = \frac{-i \cdot \widetilde{D}_c^*}{2 \mathcal{D}^*} \cdot \mathcal{T}_{41} \cdot \Omega_s^2 + \mathcal{O}(\Omega_s^3),
\end{equation}
where $\mathcal{T}_{41} \equiv \rho_{41}^{(1)}/\Omega_s$. This generalizes to all configurations by substituting $\Delta \rho_{\text{probe}}$ and $\mathcal{T}_{4k}$, yielding \eqref{eq:unified_gain}. Higher-order terms are negligible under the perturbation hierarchy.

\section{Proof of Theorem \ref{V-Type QCTC}}
We first solve the first-order coherence equations for level $|4\rangle$ coherences, and 
the closed-form solution for the $\rho_{41}^{(1)}$ is
\begin{equation}
	\rho_{41}^{(1)} =\frac{1}{\widetilde{\mathfrak{d}}^V}\left(-\frac{i}{2}\Omega_s\rho_{31}^{(0)}+\frac{\Omega_c\Omega_s}{4\mathfrak{d}^V_{2}}\rho_{32}^{(0)}+\frac{\Omega_p\Omega_s}{4\mathfrak{d}^V_{3}}\rho_{33}^{(0)}\right),
\end{equation}
which defines the signal-to-coherence transfer coefficient as
\begin{equation}
	\mathcal{T}^V_{41} \equiv \frac{\rho_{41}^{(1)}}{\Omega_s} =\frac{1}{\widetilde{\mathfrak{d}}^V}\left(-\frac{i}{2}\rho_{31}^{(0)}+\frac{\Omega_c}{4\mathfrak{d}^V_{2}}\rho_{32}^{(0)}+\frac{\Omega_p}{4\mathfrak{d}^V_{3}}\rho_{33}^{(0)}\right).
\end{equation}

For second-order corrections $\mathcal{O}(\Omega_s^2)$, $\rho_{21}^{(2)}$ and $\rho_{31}^{(2)}$ retains the same homogeneous structure as the zero-order system, with additional source terms from first-order level $|4\rangle$ coherences. The full 2×2 system now reads
\begin{equation}
	\begin{pmatrix} 
		\widetilde{D}_{p}&-\eta^*\\-\eta^*&\widetilde{D}_{c} 
	\end{pmatrix} 
	\begin{pmatrix}  
		\rho_{21}^{(2)} \\ \rho_{31} ^{(2)} 
	\end{pmatrix} =  
	\begin{pmatrix} 
		- \frac{i}{2} \Omega_s \rho_{41}^{(1)}-\frac{i}{2}\Oc \omega_c^{(2)}\\ - \frac{i}{2} \Omega_s \rho_{42}^{(1)}-\frac{i}{2}\Op \omega_p^{(2)} 
	\end{pmatrix}.
\end{equation}

By Cramer’s rule, the signal-induced change in probe coherence is
\begin{align}
	\Delta \rho_{31} = -\frac{i\Os}{2\mathcal{D}}\left[\widetilde{D}_{c}\rho_{41}^{(1)}+\frac{\Oc\Op}{4D^*_{cp}}\rho_{42}^{(1)}\right] + \text{pop.} +\mathcal{O}(\Os^3)
\end{align}
where population corrections are negligible in the weak-signal limit. Substituting $\rho_{42}^{(1)}$ and $\mathcal{T}^V_{41}$ yields the unified QCTC form in Theorem \ref{theorem:unified_channel}. This completes the proof.

\bibliography{IEEEabrv,references} 

\end{document}